\documentclass[runningheads]{llncs}
\usepackage[T1]{fontenc}
\usepackage{graphicx}
\usepackage{stmaryrd}
\usepackage{physics}
\usepackage{amssymb}
\usepackage{tikz}
\usetikzlibrary{arrows.meta,positioning,fit}
\usepackage{bbding}

\begin{document}
\title{Semantics-Based Verification of an Implemented Shor Oracle for ECDLP in Qrisp}

\titlerunning{Verifying a Shor Oracle for ECDLP}

\author{Lei Zhang\Envelope\orcidID{0000-0001-9343-3654} \and
Zhiyuan Chen}
\authorrunning{L. Zhang and Z. Chen}
\institute{Department of Information Systems, University of Maryland, Baltimore County, MD, USA,  
\email{leizhang@umbc.edu, zhchen@umbc.edu}}

%
%
\maketitle              
\begin{abstract}
Shor-style quantum algorithms for the elliptic-curve discrete logarithm problem (ECDLP) are highly sensitive to the exact semantics of their group-operation oracles. Consequently, minor implementation choices can invalidate the intended mathematical model and lead to misleading conclusions. This paper introduces a semantics-first verification perspective for an end-to-end, compilable ECDLP implementation built on Qrisp. We specify the implemented oracle at the level of program semantics, derive refinement-style verification obligations for its key components, and provide a high-level complexity argument for the resulting oracle family. A small case study highlights that (i) the core point-update primitive agrees with a classical reference on well-formed inputs, yet (ii) controlled execution may violate the expected control law under the evaluated toolchain, despite a passing trivial control sanity check. These results position semantic auditing as a practical prerequisite for trustworthy ECDLP-oriented quantum software.
\keywords{elliptic-curve cryptography \and Shor's algorithm \and program semantics \and quantum software engineering \and quantum software verification}
\end{abstract}

\section{Introduction}\label{sec:intro}

Shor-style quantum algorithms threaten widely deployed public-key cryptosystems~\cite{ahmed2025survey,zhang2020quantum,zhang2023making}. 
Quantum programming frameworks such as Qrisp~\cite{seidel2024qrisp} provide high-level abstractions and compilation infrastructure, including support for Shor-style implementations~\cite{polimeni2025end}. 
Compared to integer factorization, Shor-style approaches to the elliptic-curve discrete logarithm problem (ECDLP) are less mature in terms of end-to-end implementations and require more substantial oracle constructions for elliptic-curve group arithmetic~\cite{albuainain2022experimental,bhatia2020efficient,larasati2021quantum}.

This paper focuses on quantum software verification for Shor-style ECDLP implementations. 
Existing work has developed formal methods and program logics for quantum programs~\cite{lewis2023formal,yu2021quantum}, including certified end-to-end verification of Shor's factoring algorithm~\cite{peng2023formally}. 
However, ECDLP implementations often rely on specialized group-arithmetic gadgets, encodings, and sign conventions, leaving a gap between the mathematical oracle specified in an algorithmic description and the oracle realized after decomposition, control construction, ancilla management, and compilation. 
This gap is critical because the ECDLP oracle determines the algebraic relation recovered from the final measurements.

We address this gap by treating the implemented ECDLP oracle as a program-semantic object rather than only as an abstract mathematical function. 
Testing alone is insufficient: small sign or representation mismatches can invalidate the intended model and yield misleading interference patterns~\cite{miranskyy2025feasibility,miranskyy2019testing,miranskyy2021testing}. 
Moreover, because large-scale fault-tolerant hardware capable of threatening modern RSA and ECC parameters is not yet available, current evaluation relies mainly on mathematical reasoning, formal analysis, and small-scale simulation.

We distinguish three layers of correctness: (i) the standalone point-update primitive, (ii) its mathematically controlled version, and (iii) the controlled operation generated by the programming framework and compiler. 
This distinction matters because a primitive may behave correctly in isolation, while its controlled realization may fail to satisfy the identity branch required by the standard control law. 
We study these layers in an end-to-end, compilable ECDLP implementation built on Qrisp~\cite{polimeni2025end}, focusing on the semantic contracts for point update and controlled oracle construction. 
Our artifacts are available on Zenodo~\cite{zhang_2026_18865803}.

This paper makes three contributions. 
First, we specify the implemented oracle as a state transformer on computational basis states. 
Second, we state refinement obligations for the in-place point-update gadget and the controlled scalar-multiplication-and-add routine. 
Third, we give a structural complexity argument showing that the oracle family is polynomial under standard assumptions on reversible modular arithmetic.

\section{Preliminaries and Formal Model}\label{sec:model}

\textbf{ECDLP and Shor-style oracles.}
Let $p$ be a prime and let $E(\mathbb{F}_p)$ be an elliptic-curve group over $\mathbb{F}_p$.
Let $G \in E(\mathbb{F}_p)$ generate a cyclic subgroup of order $r$, and let $P=\ell G$ for an unknown $\ell \in \mathbb{Z}_r$.
A Shor-style ECDLP oracle uses two $n$-qubit quantum registers with computational-basis values $x_1,x_2 \in \mathbb{Z}_{2^n}$, which index scalar multiples of $G$ and $P$.
The oracle computes a group expression, such as $x_1G+x_2P$, into an accumulator point register before the inverse quantum Fourier transform~\cite{weinstein2001implementation} extracts a linear constraint.
The exact oracle semantics matters: if the implementation realizes $x_1G+x_2P$ while the analysis assumes $x_1G-x_2P$, the induced relation and classical post-processing differ by a sign.

\textbf{Points and registers.}
Let $\mathcal{O}$ denote the identity of $E(\mathbb{F}_p)$.
The accumulator encodes a point $R \in E(\mathbb{F}_p)$ using two modular registers for affine coordinates; in the case study (Section~\ref{sec:casestudy}), $(0,0)$ encodes $\mathcal{O}$.
We write $\ket{x_1}\ket{x_2}\ket{R}$ for a computational basis state.

\textbf{Implemented oracle.}
Figure~\ref{fig:oracle-composition} summarizes the compiled oracle as two sequential scalar-multiplication-and-add updates of the accumulator, controlled bitwise by $x_1$ and $x_2$. Let 
$f_{\mathrm{impl}}(x_1,x_2) = x_1G \oplus x_2P$ 
denote the point-valued function computed into the accumulator by the implementation.
The implemented oracle is
\begin{equation}
\mathcal{O}_{\mathrm{impl}} : \ket{x_1}\ket{x_2}\ket{R} \mapsto
\ket{x_1}\ket{x_2}\ket{R \oplus x_1G \oplus x_2P},
\label{eq:oracle-impl}
\end{equation}
where $\oplus$ denotes the elliptic-curve group operation as realized by the implementation's point encoding. Substituting $P=\ell G$ gives
$f_{\mathrm{impl}}(x_1,x_2)=x_1G \oplus x_2P=(x_1+\ell x_2)G$.

\begin{figure}[t]
\centering
\resizebox{\linewidth}{!}{%
\begin{tikzpicture}[
  font=\small,
  box/.style={draw, rounded corners, minimum height=8mm, minimum width=30mm, align=center},
  data/.style={-Latex, thick},
  ctrl/.style={-Latex, dashed, thick},
  lab/.style={align=left}
]
\node[box] (mG) {$\mathrm{MultAdd}_G$};
\node[box, right=18mm of mG] (mP) {$\mathrm{MultAdd}_P$};
\node[lab, right=18mm of mP] (out) {$R \oplus x_1G \oplus x_2P$};

\node[lab, left=18mm of mG] (R) {$R$};
\draw[data] (R.east) -- (mG.west);
\draw[data] (mG.east) -- (mP.west);
\draw[data] (mP.east) -- (out.west);

\node[lab, above=6mm of mG] (x1) {$x_1$};
\draw[ctrl] (x1.south) -- (mG.north);
\node[font=\scriptsize, above=5pt of mG] {(bitwise control)};

\node[lab, above=6mm of mP] (x2) {$x_2$};
\draw[ctrl] (x2.south) -- (mP.north);
\node[font=\scriptsize, above=5pt of mP] {(bitwise control)};

\end{tikzpicture}
}
\caption{Oracle composition in the implementation. The accumulator $R$ is updated by two scalar-multiplication-and-add routines, controlled bitwise by $x_1$ and $x_2$, yielding $R \oplus x_1G \oplus x_2P$. Here $\oplus$ denotes the elliptic-curve group operation under the implementation's point encoding, not bitwise XOR.}
\label{fig:oracle-composition}
\end{figure}

\textbf{Refinement notion.}
Let $C$ be a circuit acting on registers $(x_1,x_2,R)$.
We write $\llbracket C\rrbracket$ for its denotation on computational basis states.
We say $C$ \emph{refines} $\mathcal{O}_{\mathrm{impl}}$ (written $C \sqsubseteq \mathcal{O}_{\mathrm{impl}}$) if, for all basis states $\ket{x_1}\ket{x_2}\ket{R}$ satisfying a well-formedness predicate $\mathsf{WF}(x_1,x_2,R)$, the action of $C$ on the point register equals~\eqref{eq:oracle-impl}, and all ancillas are returned to their initial states.

\textbf{Well-formedness.}
We parameterize our specifications by a predicate $\mathsf{WF}$ that excludes exceptional inputs for the affine gadget, such as division by zero in slope computation, adding inverse points, or invalid intermediate encodings of $\mathcal{O}$.
In this pilot study, $\mathsf{WF}$ is treated as an explicit precondition; handling exceptional cases is left to future work.

\section{Semantic Verification Obligations and Complexity}\label{sec:verification}

This section states the refinement obligations induced by the implemented oracle and derives a structural bound on the oracle's circuit size. The obligations reflect the three-layer distinction introduced in Section~\ref{sec:intro}: standalone primitive correctness, controlled-operation correctness, and correctness after framework-level generation and compilation.

\subsection{In-place point update (correctness obligation)}\label{sec:add-obligation}

Let $\mathrm{AddInpl}_G$ denote the implementation's in-place point-update gadget for adding a fixed base point $G$ over modulus $p$.
Its intended behavior is
\[
\ket{R} \mapsto \ket{R \oplus G}.
\]
Here \emph{in-place} means that the accumulator register storing $R$ is overwritten with $R \oplus G$, up to ancillas that are uncomputed, rather than writing the result to a separate output register.
The first verification obligation is a partial-correctness statement for this standalone primitive.

\begin{lemma}[In-place point update (partial correctness)]
\label{lem:addinpl}
For any basis state $\ket{R}$ such that $\mathsf{WF}(R,G)$ holds,
$\llbracket \mathrm{AddInpl}_G \rrbracket(\ket{R}) = \ket{R \oplus G}$,
and the gadget returns all ancillas to $\ket{0}$.
\end{lemma}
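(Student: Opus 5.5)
We plan to prove the lemma by symbolic execution of the gadget on a generic well-formed basis state $\ket{R}=\ket{x}\ket{y}$ with $R=(x,y)\neq\mathcal{O}$ and $G=(x_G,y_G)$ fixed classical constants. Under $\mathsf{WF}(R,G)$ we have $x\neq x_G$, so the affine chord rule applies: $\lambda=(y-y_G)(x-x_G)^{-1}$, $x'=\lambda^2-x-x_G$, $y'=\lambda(x_G-x')-y_G$. We decompose $\mathrm{AddInpl}_G$ into its sequence of reversible modular-arithmetic subroutines: constant subtraction, modular inversion, modular multiplication, in-place squaring and subtraction, and the uncomputation steps. Each subroutine is treated as a black box with its own basis-state contract, assumed from standard reversible modular arithmetic and the Qrisp library semantics, e.g.\ $\ket{a}\ket{0}\mapsto\ket{a}\ket{a^{-1}\bmod p}$ for $a\neq 0$. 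We then compose these contracts Hoare-style, tracking after each step the classical values held in the coordinate registers and in every ancilla.

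The argument proceeds in three stages. \emph{First}, the forward computation. Translate $x\mapsto x-x_G$ and $y\mapsto y-y_G$. Compute $\lambda$ into an ancilla; well-formedness guarantees the inverse exists. Then compute $x'$ and overwrite the $x$ register with it. \emph{Second}, the reversibility trick that makes the update in-place. From $x'$ and $\lambda$ one recovers $x=\lambda^2-x'-x_G$. Likewise, $\lambda$ is re-derivable from the new point via $\lambda=-(y'+y_G)(x'-x_G)^{-1}$, since $(x',-y')$, $R$ and $G$ are collinear. So the $\lambda$ ancilla can be uncomputed from output data alone. This requires $x'\neq x_G$, i.e.\ $R\oplus G\neq \pm G$, which I take to be part of $\mathsf{WF}$ (it excludes $R=\mathcal{O}$ and $R=-2G$ type exceptions). \emph{Third}, verify that the final values in the coordinate registers equal $(x',y')$, with sign conventions matching the implementation's $\oplus$, and that each ancilla returns to $\ket{0}$. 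The ancilla check uses the fact that each uncomputation step is the exact inverse contract applied to identical register contents.

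I expect the main obstacle to be the ancilla-cleanliness claim, not the coordinate formula. Uncomputing $\lambda$ and the inversion workspace requires showing that the slope recomputed from the output point equals the original slope bitwise, including the sign conventions for $y$. It also requires that the intermediate inversion on $x'-x_G$ is nonzero. My first attempt would be to state explicitly the algebraic identity $(y'+y_G)=-\lambda(x'-x_G)$, verify it directly from the chord formulas, and extend $\mathsf{WF}$ to cover its nondegeneracy condition. Since the statement is basis-state partial correctness, linearity is not needed: the result holds pointwise on each well-formed basis state, and we make no claim about non-well-formed states.
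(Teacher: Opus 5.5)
The paper does not prove this lemma. It states it as an obligation deferred to ``future mechanization or symbolic validation.'' The only support is a single basis-state simulation witness: $G=(3,2)$, $R_0=(4,3)$ on $y^2=x^3+3x+3$ over $\mathbb{F}_5$, with output $(4,2)=R_0\oplus G$.

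Your symbolic-execution plan is therefore a genuinely different and much stronger route, because it targets the universally quantified claim. Its algebra is sound:
\begin{itemize}
\item the chord formulas are correct;
\item the collinearity identity $y'+y_G=-\lambda(x'-x_G)$ lets the slope be uncomputed from output data (on the witness, $-(2+2)(4-3)^{-1}\equiv 1=\lambda$);
\item the extra nondegeneracy condition $x'\neq x_G$ (equivalently $R\neq -2G$) makes precise what the paper's informal $\mathsf{WF}$ only gestures at.
\end{itemize}

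What the paper's witness has, and your plan lacks, is contact with the actual compiled gadget. You verify a reconstructed textbook step sequence. The paper's filter also discards $R_y\equiv G_y$ (i.e.\ $\lambda=0$) as producing singular intermediates in the gadget's initial normalization steps, yet this case is harmless in your model. That suggests the real gadget contains an operation your decomposition does not, for example an in-place multiplication or inversion involving $y-y_G$ or $\lambda$. So $\mathsf{WF}$ has to be read off the code, not off the chord formulas.

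Two steps of your ancilla argument also need tightening.

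First, the $\lambda$-uncomputation is not the inverse contract applied to \emph{identical} register contents. It is applied to different contents ($x'-x_G$ and $-(y'+y_G)$) that yield the same value. That is sound only if each division or inversion call is garbage-free on its own. A fixed-round Kaliski-style inversion typically leaves operand-dependent workspace, such as per-round branch bits that depend on $x-x_G$ rather than on $\lambda$. In its standard form it also returns a Montgomery-scaled inverse. So that workspace must be cleaned locally inside each call, and the scaling tracked in your contracts. It cannot be left to the later slope recomputation, as your obstacle paragraph seems to suggest.

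Second, make the $y$-register update explicit. The value $y-y_G=\lambda(x-x_G)$ has to be cleared while $x$ is still present, and $y'=\lambda(x_G-x')-y_G$ loaded afterwards. This is a second identity the symbolic execution must check, alongside the slope identity.
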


Lemma~\ref{lem:addinpl} is an algebraic verification task: the gadget computes
slope-like values using reversible modular arithmetic, including the fixed-round
Kaliski-style inversion routine used in Qrisp ECDLP~\cite{polimeni2025end},
then updates affine coordinates and uncomputes temporaries. We treat it as an
explicit obligation for future mechanization or symbolic validation.

\subsection{Controlled scalar multiplication-and-add}\label{sec:multadd}

Let $\mathrm{MultAdd}_G$ denote the scalar-multiplication-and-add routine that processes an $n$-qubit register $k$ bitwise and conditionally applies $\mathrm{AddInpl}_{2^iG}$, where the doubles $2^iG$ are generated classically. Its intended effect is
\[
(k,R)\mapsto (k,\, R \oplus kG),
\]
i.e., it accumulates $kG$ into $R$ via conditional additions of precomputed doubles.

\begin{lemma}[Scalar multiplication-and-add (refinement)]
\label{lem:multadd}
Assume Lemma~\ref{lem:addinpl} holds for all constants in the doubling schedule, the controlled invocations satisfy the standard control law, and $\mathsf{WF}$ holds for all intermediate accumulator states. Then, for any
basis state $\ket{k}\ket{R}$ with $k\in\mathbb{Z}_{2^n}$,
\[
\llbracket \mathrm{MultAdd}_G \rrbracket(\ket{k}\ket{R}) = \ket{k}\ket{R \oplus kG}.
\]
\end{lemma}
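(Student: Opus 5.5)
The plan is induction over the bits of $k$, processed in the order the routine uses them. Write $k=\sum_{i=0}^{n-1} k_i 2^i$ with $k_i\in\{0,1\}$. Model $\mathrm{MultAdd}_G$ as the sequential composition
\[
\mathrm{MultAdd}_G = C_{n-1}\circ\cdots\circ C_1\circ C_0,
\]
where $C_i$ is $\mathrm{AddInpl}_{2^iG}$ controlled on qubit $k_i$. The key invariant is that after the prefix $C_{i-1}\circ\cdots\circ C_0$ is applied to $\ket{k}\ket{R}$, the state is the basis state $\ket{k}\ket{R\oplus k^{(i)}G}$, where $k^{(i)}=\sum_{j<i}k_j2^j$, and all ancillas are $\ket{0}$. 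The base case $i=0$ is trivial, since $k^{(0)}=0$ and $R\oplus 0G=R$.

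For the inductive step, apply $C_i$ to $\ket{k}\ket{R_i}$ with $R_i=R\oplus k^{(i)}G$. By the standard control law, which is assumed in the statement, $C_i$ acts on a basis state with control value $k_i$ as the identity when $k_i=0$. When $k_i=1$ it acts as $\ket{k}\otimes\llbracket\mathrm{AddInpl}_{2^iG}\rrbracket$. In both cases the $k$ register is left unchanged.

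\begin{itemize}
\item If $k_i=0$, the state stays $\ket{k}\ket{R_i}$. This equals $\ket{k}\ket{R\oplus k^{(i+1)}G}$ because $k^{(i+1)}=k^{(i)}$.
\item If $k_i=1$, the hypothesis that $\mathsf{WF}$ holds for all intermediate accumulator states gives $\mathsf{WF}(R_i,2^iG)$. Then Lemma~\ref{lem:addinpl}, instantiated at the constant $2^iG$ from the doubling schedule, gives $\ket{R_i\oplus 2^iG}$ with ancillas restored to $\ket{0}$.
\end{itemize}

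To close the step in the $k_i=1$ case, use associativity and the homomorphism property of scalar multiplication in $E(\mathbb{F}_p)$:
\[
(R\oplus k^{(i)}G)\oplus 2^iG = R\oplus (k^{(i)}+2^i)G = R\oplus k^{(i+1)}G.
\]
This relies on $\oplus$, as realized on well-formed encodings, agreeing with the true group law. Lemma~\ref{lem:addinpl} supplies exactly this on $\mathsf{WF}$ inputs. At $i=n$ we have $k^{(n)}=k$, which yields the claim. Because every step maps basis states to basis states with clean ancillas, the denotations compose directly and no superposition or entanglement argument is needed.

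I expect the main obstacle to be making the hypotheses precise rather than the induction itself. First, "controlled invocations satisfy the standard control law" has to be read as a statement about the framework-generated controlled gadget. That gadget must act as the identity on the whole accumulator-and-ancilla space when the control is $0$. This is the part the case study says may fail in practice, so the proof should state it as an explicit premise on each $C_i$. Second, the $\mathsf{WF}$ hypothesis should be indexed per step. It is only needed on the pairs $(R\oplus k^{(i)}G,\,2^iG)$ with $k_i=1$, which excludes exceptional cases such as $R_i=\pm 2^iG$ or $R_i=\mathcal{O}$. I would first try stating this as the precondition $\bigwedge_{i:k_i=1}\mathsf{WF}(R\oplus k^{(i)}G,2^iG)$ and check that it is exactly what the induction consumes.
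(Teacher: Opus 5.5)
Your proposal is correct and follows the paper's proof essentially step for step. Both argue by induction on the loop index with the invariant that the accumulator holds $R \oplus \sum_{j<i} k_j(2^jG)$, using the control law when $k_i=0$ and Lemma~\ref{lem:addinpl} when $k_i=1$. Your extra details (clean ancillas, the group-law step $(R\oplus k^{(i)}G)\oplus 2^iG = R\oplus k^{(i+1)}G$, and requiring $\mathsf{WF}$ only at steps with $k_i=1$) spell out what the paper's short proof leaves implicit.
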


\begin{proof}
By induction on the loop index $i$. After $i$ iterations, the accumulator is
$R \oplus \sum_{j<i} k_j(2^jG)$.
At iteration $i$, the controlled invocation of $\mathrm{AddInpl}_{2^iG}$ adds
$2^iG$ iff $k_i=1$ and otherwise acts as identity.
Thus the invariant is extended to $j\leq i$.
After $n$ iterations, the accumulator is $R\oplus \sum_{j<n}k_j(2^jG)=R\oplus kG$.
\end{proof}

\subsection{Oracle refinement and polynomial complexity}\label{sec:complexity}

By composing two instances of Lemma~\ref{lem:multadd} (for $G$ and $P$), the
implementation refines $\mathcal{O}_{\mathrm{impl}}$ in~\eqref{eq:oracle-impl}
under $\mathsf{WF}$ and the stated controlled-correctness assumptions.

We now argue that the implementation's oracle circuit family is polynomial in
the field-size parameter $m=\lceil \log_2 p\rceil$.

\begin{lemma}[Structural cost bound]
\label{lem:structcost}
Let $\mathrm{Cost}_{\mathrm{Add}}(m)$ be a cost measure (e.g., gate count or
$T$-count) for $\mathrm{AddInpl}_G$ on $m$-bit modular registers.
Then $\mathrm{MultAdd}_G$ has cost at most $n\cdot \mathrm{Cost}_{\mathrm{Add}}(m)+O(n)$,
and the oracle $\mathcal{O}_{\mathrm{impl}}$ has cost at most
$2n\cdot \mathrm{Cost}_{\mathrm{Add}}(m)+O(n)$.
\end{lemma}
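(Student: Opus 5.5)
The plan is to read the bound directly off the circuit structure of $\mathrm{MultAdd}_G$ described in Section~\ref{sec:multadd}, and then compose two such routines as in Figure~\ref{fig:oracle-composition}. The cost measure is assumed additive under sequential composition, so the cost of a circuit is the sum of the costs of its pieces. Both gate count and $T$-count satisfy this.

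\textbf{Step 1: unroll the loop.} $\mathrm{MultAdd}_G$ is a loop over $i=0,\dots,n-1$. Iteration $i$ performs one controlled invocation of $\mathrm{AddInpl}_{2^iG}$, with control qubit $k_i$. The doubles $2^iG$ are computed classically at compile time, so they contribute no quantum gates; they only fix the constants hard-wired into each gadget. The loop therefore unrolls into exactly $n$ controlled gadget calls, plus bookkeeping such as loop-level ancilla resets or register routing. That bookkeeping costs $O(1)$ per iteration, giving $O(n)$ in total.

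\textbf{Step 2: bound one controlled gadget.} This is the step I expect to be the main obstacle, because controlling a gadget is not free in general. My first attempt is to interpret $\mathrm{Cost}_{\mathrm{Add}}(m)$ as a uniform bound that holds for every constant $G'$ in the doubling schedule; the gadget's structure does not depend on the value of the constant, only on $m$. I would then take $\mathrm{Cost}_{\mathrm{Add}}(m)$ to be the cost of the gadget as it is actually used, that is, in controlled form.

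If one insists instead that $\mathrm{Cost}_{\mathrm{Add}}$ measures the uncontrolled gadget, there is a fallback. Use the standard compute–copy–uncompute control pattern: controlling only the final coordinate-update step adds overhead that is either absorbed into a constant-factor redefinition of $\mathrm{Cost}_{\mathrm{Add}}(m)$, or contributes $O(1)$ control-specific gates per call to the $O(n)$ term. I would state explicitly which convention is adopted. This cost assumption is logically independent of the semantic assumption (the control law) used in Lemma~\ref{lem:multadd}.

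\textbf{Step 3: sum and compose.} Summing over the $n$ iterations gives
\[
\mathrm{Cost}(\mathrm{MultAdd}_G)\le n\cdot \mathrm{Cost}_{\mathrm{Add}}(m)+O(n).
\]
The oracle $\mathcal{O}_{\mathrm{impl}}$ is the sequential composition $\mathrm{MultAdd}_P\circ\mathrm{MultAdd}_G$, acting on the registers $x_1$ and $x_2$ respectively. Each register has $n$ qubits, and the same uniform bound applies to the constants $2^iP$. Adding the two bounds gives $2n\cdot \mathrm{Cost}_{\mathrm{Add}}(m)+O(n)$.

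\textbf{Polynomiality.} To conclude that the oracle family is polynomial, I would substitute the standard polynomial bound on reversible modular arithmetic. This gives $\mathrm{Cost}_{\mathrm{Add}}(m)=\mathrm{poly}(m)$, for instance $O(m^2)$ per modular multiplication and a fixed-round Kaliski-style inversion with $O(m)$ rounds of $O(m)$-cost steps. Taking $n=O(m)$, the oracle cost is $\mathrm{poly}(m)$.
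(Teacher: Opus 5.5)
Your argument matches the paper's proof. $\mathrm{MultAdd}_G$ unrolls into exactly $n$ controlled calls of $\mathrm{AddInpl}_{2^iG}$, with the doubles precomputed classically and $O(1)$ overhead per iteration, and $\mathcal{O}_{\mathrm{impl}}$ composes two such routines, giving $n\cdot\mathrm{Cost}_{\mathrm{Add}}(m)+O(n)$ and $2n\cdot\mathrm{Cost}_{\mathrm{Add}}(m)+O(n)$. The paper silently charges each controlled call $\mathrm{Cost}_{\mathrm{Add}}(m)$, which is your primary convention, so keep that one: your fallback's claim that control adds only $O(1)$ gates per call fails in general for $T$-count or Clifford+$T$ gate count (controlling even an $m$-qubit CNOT copy turns $m$ CNOTs into $m$ Toffolis), so that overhead cannot be pushed into the $O(n)$ term.
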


\begin{proof}
$\mathrm{MultAdd}_G$ iterates exactly $n$ times and performs at most one
controlled invocation of $\mathrm{AddInpl}$ per iteration; the remaining work is
constant overhead and classical constant updates. The oracle composes two such
calls.
\end{proof}

\begin{theorem}[Polynomial oracle family (under standard arithmetic assumptions)]
\label{thm:poly}
Assume the modular arithmetic primitives used in $\mathrm{AddInpl}_G$ compile to
uniform reversible circuits of size polynomial in $m$, the modular inversion
subroutine uses a bounded number of iterations polynomial in $m$, and the
scalar-register size $n$ is polynomially bounded in $m$. Then the oracle circuit
family realizing $\mathcal{O}_{\mathrm{impl}}$ is uniform and polynomial in $m$.
\end{theorem}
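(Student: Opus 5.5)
The plan is to reduce Theorem~\ref{thm:poly} to Lemma~\ref{lem:structcost}. Once we show that $\mathrm{Cost}_{\mathrm{Add}}(m)$ is bounded by a polynomial in $m$, and that the cost of \emph{controlling} $\mathrm{AddInpl}_{2^iG}$ on a single qubit at most multiplies it by a constant, the bound $2n\cdot\mathrm{Cost}_{\mathrm{Add}}(m)+O(n)$ together with $n\le \mathrm{poly}(m)$ gives the polynomial bound. Uniformity will be handled separately. We must exhibit a classical polynomial-time procedure that, given $(p,G,P,n)$, outputs the circuit.

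\textbf{Bounding $\mathrm{Cost}_{\mathrm{Add}}(m)$.} Decompose $\mathrm{AddInpl}_G$ into its stages: compute the slope (a modular subtraction, then the fixed-round Kaliski-style inversion, then a modular multiplication), update the affine coordinates (squaring, subtractions, multiplication), and finally uncompute the temporaries. The number of stages is a constant independent of $m$. By hypothesis, each modular add, subtract or multiply primitive compiles to a reversible circuit of size $\mathrm{poly}(m)$. The inversion consists of a bounded number $t(m)=\mathrm{poly}(m)$ of rounds, and each round is built from a constant number of such primitives. Hence inversion costs $t(m)\cdot\mathrm{poly}(m)$. Uncomputation at most doubles the cost of the forward stages. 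Summing gives $\mathrm{Cost}_{\mathrm{Add}}(m)=\mathrm{poly}(m)$.

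\textbf{Controlled version.} Each gate is replaced by its singly controlled counterpart (CNOT$\to$Toffoli, Toffoli$\to$ a constant-size decomposition using one ancilla). This increases the size by a constant factor, and the constant does not depend on $m$. Alternatively, we can control only the non-self-uncomputing core, which costs no more. The cost measure in Lemma~\ref{lem:structcost} should be read as the cost of the controlled invocation, so $\mathrm{Cost}_{\mathrm{Add}}$ stays polynomial.

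\textbf{Uniformity.} The classical generator loops $i=0,\dots,n-1$. At each step it computes the doubles $2^iG$ and $2^iP$ by repeated classical point doubling, which takes $O(n)$ field operations on $m$-bit numbers, i.e.\ $\mathrm{poly}(m)$ time. It then emits the fixed template for $\mathrm{AddInpl}$, instantiated with these constants. The templates for the primitives and for the inversion are uniform by assumption. Both the emission and the constant computation therefore run in time polynomial in $m$.

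\textbf{Main obstacle.} I expect the main obstacle to be the inversion step: we must justify that a \emph{fixed} round count suffices for all inputs. Data-dependent loop lengths are not permitted in a circuit, which is why the hypothesis on bounded iterations is imposed. I would cite the standard $2m$ bound on the iterations of Kaliski's algorithm and take it as given. A secondary subtlety is that exceptional ($\neg\mathsf{WF}$) inputs affect correctness but not size. The circuit is the same for every input, so the cost bound holds unconditionally.
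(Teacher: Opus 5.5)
Your proposal is correct and follows essentially the paper's route: the paper likewise obtains Theorem~\ref{thm:poly} from Lemma~\ref{lem:structcost} (one point addition per scalar bit), using the hypotheses that each point addition reduces to polynomial-size uniform reversible modular arithmetic with a fixed-round inversion and that $n\le\mathrm{poly}(m)$. Your constant-factor bound for the controlled invocations is a useful addition, since Lemma~\ref{lem:structcost} charges each controlled call at the uncontrolled cost $\mathrm{Cost}_{\mathrm{Add}}(m)$, and so is your explicit uniformity argument via classical precomputation of $2^iG$ and $2^iP$; the paper leaves both points implicit.
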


Theorem~\ref{thm:poly} excludes prototype constructions that hide exponential advice in hard-coded encodings or bespoke permutations; in our setting, the cost is explicit in reversible modular arithmetic and polynomial in the register size.

\section{Case Study: Semantic and Complexity Witnesses}\label{sec:casestudy}

For the case study, we use the short Weierstrass curve $E: y^2 \equiv x^3 + ax + b \pmod p$ over the prime field $\mathbb{F}_p$, with parameters $p=5$, $a=3$, and $b=3$. We measure the oracle components from the Qrisp-based ECDLP implementation~\cite{polimeni2025end}. As in the reference code, we adopt the convention that $(0,0)$ encodes
the identity element $\mathcal{O}$.

\textbf{WF filtering and a semantic witness for Lemma~\ref{lem:addinpl}.}
Because the in-place addition gadget is derived for a generic affine case, we first filter accumulator inputs using a lightweight well-formedness screen that avoids immediate singular intermediates induced by the gadget's initial normalization steps, e.g., values that make $R_x-G_x \equiv 0 \pmod p$ or $R_y-G_y \equiv 0 \pmod p$. For this instance, the filter yields a single candidate point $R_0=(4,3)$ with constant point $G=(3,2)$. On this input, a basis-state simulation witness confirms that the in-place gadget computes the expected group update:
\[
\mathrm{AddInpl}_G(R_0)=(4,2)=R_0\oplus G,
\]
matching the classical reference implementation. This provides a concrete witness
supporting Lemma~\ref{lem:addinpl} under $\mathsf{WF}$.

\textbf{Control-law anomaly for the elliptic-curve gadget.} 
Shor-style ECDLP oracles require controlled invocations of the point-update gadget. We therefore test the standard control law: when the control qubit is set to $\ket{0}$, the controlled operation should act as the identity on the target. Let $c$ be a control qubit, and consider the externally controlled invocation of $\mathrm{AddInpl}_G$ on the accumulator register encoding $R=(4,3)$, implemented using the Qrisp control context $\mathrm{control}(c)$. With $c=\ket{1}$, the update behaves as expected and maps $(4,3)\mapsto(4,2)$. However, with $c=\ket{0}$, the accumulator changes from $(4,3)$ to $(3,3)$ in our environment, violating the expected identity behavior for a controlled operation. To rule out a global control malfunction, we additionally validate a trivial controlled-$X$ sanity test under $c=\ket{0}$, which yields deterministic measurements $\{00:256\}$, confirming that basic control works and that the anomaly is specific to the elliptic-curve gadget's controlled execution/compilation. This observation motivates treating controlled-oracle correctness as an explicit refinement obligation rather than an assumption.

\begin{table}[t]
\centering
\caption{Compilation metrics for AddInpl and Ctrl-AddInpl on the prototype instance}
\label{tab:metrics}
\begin{tabular}{lrrrr}
\hline
Circuit & \#Qubits & Depth & T-count & CX-count \\
\hline
AddInpl & 6 & 39,130 & 13,652 & 36,261 \\
Ctrl-AddInpl & 7 & 39,196 & 13,709 & 36,406 \\
\hline
\end{tabular}
\end{table}

\textbf{Compilation metrics (evidence for Lemma~\ref{lem:structcost}).} 
Table~\ref{tab:metrics} reports compilation metrics for the prototype instance. These data should be interpreted only as a concrete compilation witness showing that the studied implementation produces finite circuits for the selected parameters and that adding control introduces limited overhead in this small instance. They are not intended as empirical scaling evidence across field sizes. The polynomial-family claim in Theorem~\ref{thm:poly} instead follows from the structural form of the implementation: scalar multiplication invokes a point-addition routine once per scalar bit, and each point addition is reduced to reversible modular arithmetic over $m$-bit registers. Therefore, under the stated assumption that the underlying modular arithmetic and fixed-round inversion routines compile to uniform polynomial-size reversible circuits, the resulting oracle family is polynomial in $m$.

To summarize, our case study shows that the in-place point-update gadget matches the classical reference under $\mathsf{WF}$, but controlled execution violates the expected control law under the evaluated toolchain, indicating a semantic gap that must be treated as an explicit verification obligation. 
The case study is therefore intended as a semantic witness and bug-finding demonstration, rather than a broad empirical validation across curves or field sizes.

\section{Conclusion and Future Work}\label{sec:conclusion}

In this paper, we present a semantics-based specification of the implemented ECDLP oracle and a refinement-based framework for reasoning about its correctness and complexity. Our case study provides concrete evidence that (i) the in-place point-update gadget can match a classical reference on well-formed inputs, and (ii) controlled execution can violate the expected control law even when basic controlled-$X$ behavior is correct. Together, these results highlight the need to treat controlled-oracle correctness as an explicit verification obligation in ECDLP-oriented quantum software. 

Future work will (i) mechanize the core refinement obligations (e.g., Lemmas~\ref{lem:addinpl}--\ref{lem:multadd}) in a proof assistant or via symbolic reasoning over $\mathbb{F}_p$, (ii) extend $\mathsf{WF}$ to cover exceptional cases (e.g., point-at-infinity and inverse-point additions) using complete point representations such as projective coordinates, and (iii) establish sufficient conditions---or a verified compilation path---under which controlled oracle components satisfy the standard control law, enabling an end-to-end refinement argument for the full oracle and the associated classical post-processing.

\section*{Acknowledgments}
We thank Dr.\ Tan Teik Guan (CEO and Co-founder of pQCee~\cite{pQCeePteLtd2026Mar}) for early discussions that inspired the direction of this study.

\subsubsection{\discintname}
The authors have no competing interests to declare that are relevant to the content of this article.

\bibliographystyle{splncs04}
\bibliography{refs}

\end{document}